\newcommand{\email}[1]{{\textit{Email:} \texttt{#1}}}
\newcommand{\homepage}[1]{{\textit{Web:} \texttt{#1}}}
\newcommand{\tmem}[1]{{\em #1\/}}
\newcommand{\tmmathbf}[1]{\ensuremath{\boldsymbol{#1}}}
\newcommand{\tmop}[1]{\ensuremath{\operatorname{#1}}}
\newcommand{\tmtexttt}[1]{{\ttfamily{#1}}}
\newenvironment{enumeratenumeric}{\begin{enumerate}[1.] }{\end{enumerate}}
\newenvironment{itemizedot}{\begin{itemize} }{\end{itemize}}
\newenvironment{proof}{\noindent\textbf{Proof\ }}{\hspace*{\fill}$\Box$\medskip}
\newtheorem{corollary}{Corollary}
\newtheorem{definition}{Definition}
\newtheorem{lemma}{Lemma}
{\theorembodyfont{\rmfamily\small}\newtheorem{problem}{Problem}}
\newtheorem{proposition}{Proposition}
\newtheorem{theorem}{Theorem}
\begin{document}

\title{Generalized companion matrix for approximate GCD}\author{Paola
Boito \thanks{\email{paola.boito@unilim.fr};
\homepage{http://www.unilim.fr/pages\_perso/paola.boito/index\_en.html}}\\
XLIM-DMI UMR 6172 Universit\'e de Limoges - CNRS \\  \and Olivier
Ruatta\thanks{\email{olivier.ruatta@unilim.fr}}\\
XLIM-DMI UMR 6172 Universit\'e de Limoges - CNRS}\maketitle

{\small{We study a variant of the univariate approximate GCD problem, where
the coefficients of one polynomial $f (x)$are known exactly, whereas the
coefficients of the second polynomial $g (x)$may be perturbed. Our approach
relies on the properties of the matrix which describes the operator of
multiplication by $g$in the quotient ring $\mathbb{C}[x] / (f)$. In
particular, the structure of the null space of the multiplication matrix
contains all the essential information about GCD$(f, g)$. Moreover, the
multiplication matrix exhibits a displacement structure that allows us to
design a fast algorithm for approximate GCD computation with quadratic
complexity w.r.t. polynomial degrees.}}

\section{Introduction\label{intro}}

The approximate polynomial greatest common divisor (denoted as AGCD) is a
central object of symbolic-numeric computation. The main difficulty of the
problem comes from the fact that is no universal notion of AGCD. One can find
different approaches and different notions for AGCD. We will not give a review
of all the existing work on this subject, but we will recall one of the most
popular approaches to show how our work brings a different point of view on
the problem.

The main approach to the computation of an AGCD consists in considering two
univariate polynomials whose coefficients are known with uncertainty. This
uncertainty can be the result of the fact that the polynomials have floating
point coefficients coming from previous computation (and so are subject to
round-off errors). The most frequently adopted formulation is related to
semi-algebraic optimization : given $\tilde{f}$ and $\tilde{g}$ two
approximate polynomials, find two polynomials $f$ and $g$ such that $\|
\tilde{f} - f\|$ and $\| \tilde{g} - g\|$ are small (lower than a given
tolerance for instance) and such that the degree of $\gcd (f, g)$ is maximal.
That is, one looks for the most singular system close to the input $(
\tilde{f}, \tilde{g})$. An $\varepsilon$-gcd is obtained if the conditions $\|
\tilde{f} - f\|< \varepsilon$ and $\| \tilde{g} - g\|< \varepsilon$ are
satisfied. One can try to compute the tolerance on the perturbation of the
input polynomial thanks to direct computation (for instance from a jump on
singular values of particular matrix for instance). This last approach has
received a great interest following the work of Zeng using Sylvester like
matrices ({\cite{Zeng}}). \

Here, we consider a slightly different problem. One of the polynomials, say
$f$, is known exactly (it is the result of an exact model) and the second one,
say $g$, is an approximate polynomial (result of measures or previous
approximation for instance). This case occurs in applications such as model
checking (to compare results of an exact model and measures). There are many
other instances of such a problem, such as simplification of fractions when
one of the polynomial is known exactly but the other one is not.

We give a example of such a situation. When modeling an electromagnetic
filter, one might want to parametrize its behavior with respect to the
frequency. But one may need to do so even if there are singularities and to do
so one may use Pad\'e approximations of the electromagnetic signal at each
point as a function of the frequency. In some cases of interest, one can know
all the singularities and so compute an exact polynomial called
characteristic. Pad\'e approximations are computed independently for each
point by a numerical process and denominators may have a non trivial gcd with
the ``characteristic'' polynomial. The denominators are not known exactly. So,
in order to identify unwanted common factors in denominators one has to
compute approximate gcds between an exact and non exact polynomials.

This AGCD problem can also be interpreted as an optimization problem. Given
$f$ exactly and $\tilde{g}$ approximately, compute a polynomial $g$ close to
$\tilde{g}$ such that $g$ has a maximal degree gcd with $f$. Our approach
takes advantage of the asymmetry of the problem and of the structure of the
quotient algebra $\mathbb{C}[x] / (f (x))$ (more accurately, of the
displacement rank of the multiplication operator in this algebra). So, we
address the following problem :

\begin{problem}
  Let $f (x) \in \mathbb{C}[x]$ a given polynomial and $g (x)$ another
  polynomial. Find $\tilde{g} (x)$ close to $g (x)$ (in a sense that will be
  explained) such that $f (x)$ and $\tilde{g} (x)$ have a gcd of maximal
  degree. \ 
\end{problem}

This may be also an interesting approach when one has two polynomials, one
known with high confidence and another with worse accuracy. This approach may
take advantage of this asymmetry which would not be possible for classical
framework based on Sylvester or B\'ezout matrices.

In this paper, we propose an approach and an algorithm to address this
problem. The proposed algorithm is ``fast'' since the exponent of its
complexity is better than the classical linear algebra exponent in the degree
of the input polynomials.

Organisation of the paper: The second section is devoted to some basic result
on algebra needed after, the third section gives an algebraic method for gcd
based on linear algebra, the fourth section recalls the Barnett formula
allowing to compute the multiplication matrix without division, the fifth
gives the displacement rank structure of the multiplication matrix, the sixth
describes the final algorithm and experiments before finishing with
conclusions and perpectives.

\section{Euclidian structure and quotient algebra\label{quot}}

In this section, we recall basic algebraic results allowing to understand the
principle of our approach. All material in this section can be found (even in
the non reduced case and in the multivariate setting) in {\cite{R}}. \

Assume that $\mathbb{K}$ is an algebraically closed field (here we think
about $\mathbb{C}$). Let $f (x)$ and $g (x) \in \mathbb{K}[x]$ and assume
that $f (x) = f_d \ast \underset{i = 1}{\overset{d}{\prod}} (x - \zeta_i)$ and
that $\zeta_i \neq \zeta_j$ for all $i \neq j$ in $\{1, \ldots, d\}$. Let
$\mathbb{A}=\mathbb{K}[x] / (f)$ and $\pi : \mathbb{K}[x] \longrightarrow
\mathbb{A}$ be the natural projection. For $i \in \{1, \ldots, d\}$, we
define $L_i (x) = \frac{\underset{j \neq i}{\prod} (x - \zeta_j)}{\underset{j
\neq i}{\prod} (\zeta_i - \zeta_j)}$, the $i^{\text{th}}$ Lagrange polynomial
associated to $\{\zeta_1, \ldots, \zeta_d \}$. Clearly, since $\deg (L_i) <
\deg (f)$ we have $\pi (L_i) = L_i$, for all $i \in \{1, \ldots, d\}$. Let
$\mathbb{A}^{\ast} = \tmop{Hom}_{\mathbb{K}} (\mathbb{A}, \mathbb{K})$ be
the usual dual space of $\mathbb{A}$. For all $i \in \{1, \ldots, d\}$, we
define $\tmmathbf{1}_{\zeta_i} : \mathbb{A} \longrightarrow \mathbb{K}$ by
$\tmmathbf{1}_{\zeta_i} (p) = p (\zeta_i)$ for all $p \in \mathbb{A}$. The
following lemma is obvious form the definition of the polynomials $L_i$ that
for $i$ and $j \in \{1, \ldots, d\}$, we have $L_i (\zeta_j) = \left\{
\begin{array}{l}
  1 \tmop{if} i = j\\
  0 \tmop{else}
\end{array} \right.$. This implies that the set $\{L_1, \ldots, L_d \}$ is a
basis of $\mathbb{A}$. A well known fact is that the set
$\{\tmmathbf{1}_{\zeta_1}, \ldots, \tmmathbf{1}_{\zeta_d} \}$ form a basis
$\mathbb{A}^{\ast}$ dual of the basis $\{L_1, \ldots, L_d \}$ of
$\mathbb{A}$. As a corollary, we have the Lagrange interpolation formula :
Each $p \in \mathbb{A}$ can be written $p (x) = \underset{i =
1}{\overset{d}{\sum}} \tmmathbf{1}_{\zeta_i} (p) \ast L_i (x)$. A funny
consequence is that if we choose $\{L_1, \ldots, L_d \}$ as a basis of
$\mathbb{A}$, for all $g \in \mathbb{K}[x]$, the remainder $\pi (g)$ of the
euclidian division of $g$ by $f$ is given by $(g (\zeta_1), \ldots, g
(\zeta_d))$ in the basis $\{L_1, \ldots, L_d \}$, i.e. $r = \underset{i =
1}{\overset{d}{\sum}} g (\zeta_i) L_i (x)$. In other word, divide $g$ by $f$
is equivalent to evaluate $g$ at the roots of $f$.

The general philosophy of this last proposition will allows us to make a lot
of proof in a very simple way. For example, it is very easy to see the
different operation in $\mathbb{A}$ using this representation. Let $g$ and
$h$ be to elements in $\mathbb{A}$, then we have $g + h = \underset{i =
1}{\overset{d}{\sum}} (g (\zeta_i) + h (\zeta_i)) \ast L_i (x)$ and $g \ast h
= \underset{i = 0}{\overset{d}{\sum}} (g (\zeta_i) \ast h (\zeta_i)) L_i (x)$
in $\mathbb{A}$. This allows us to avoid the use of the section $\sigma$. In
fact, the Lagrange polynomials $L_1, \ldots, L_d$ reveal a deeper structure on
the algebra $\mathbb{A}$ : The polynomials $L_1, \ldots, L_d$ are the
idempotents of $\mathbb{A}$, i.e. $L_i \ast L_j = \left\{ \begin{array}{l}
  L_i \tmop{if} i = j\\
  0 \tmop{else}
\end{array} \right.$.

Thanks to this description of the quotient algebra, it is easy to derive
algorithms for both polynomial solving and gcd computation even though the
problems are of very different nature.

Remark that we have expressed everything in the monomial basis since it is the
most widely used basis to express polynomials but we could use other bases. A
particular basis is the Chebyshev basis where all results are exactly the same
since it is a graduated basis.

\section{An algebraic algorithm for gcd computation\label{exact}}

To first give an idea on how to exploit the section above in order to design
algorithm for gcd, We recall a classical method for polynomial solving (see
{\cite{C}} for instance). Proofs are given for the sake of completeness and
because very similar ideas will lead us to the AGCD computation.

\subsection{Roots via eigenvalues}

Let $f (x) = \underset{i = 0}{\overset{d}{\sum}} f_i x^i \in \mathbb{C}[x]$
be a polynomial of degree $d$. Then we consider the matrix of the
multiplication by $x$ in $\mathbb{C}[x] / (f)$. Its matrix in the monomial
basis $1, \ldots, x^{d - 1}$ is the following:
\[ \tmop{Frob} (f) = \begin{array}{cc}
     & \left(\begin{array}{ccccc}
       1 & x & x^2 & \cdots & x^{d - 1}
     \end{array}\right)\\
     \left(\begin{array}{c}
       1\\
       x\\
       x^2\\
       \vdots\\
       x^{d - 1}
     \end{array}\right) & \text{$\left(\begin{array}{ccccc}
       0 & 0 & 0 & \cdots & - \frac{f_0}{f_d}\\
       1 & 0 & 0 & \cdots & - \frac{f_1}{f_d}\\
       0 & 1 & 0 & \cdots & - \frac{f_2}{f_d}\\
       \vdots & \vdots & \vdots & \ddots & \vdots\\
       0 & 0 & 0 & \cdots & - \frac{f_{d - 1}}{f_d}
     \end{array}\right)$}
   \end{array} \]
well known as the Froebenius companion matrix associated to $f$.

\begin{proposition}
  Let $f (x) \in \mathbb{C}[x]$be polynomial of degree $d$ with $d$ distinct
  roots $\mathcal{Z}(f) =\{z_1, \ldots, z_d \}$, then the eigenvalues of
  $\tmop{Frob} (f)$ are the roots of $f (x)$, i.e. $\tmop{Spec} (\tmop{Frob}
  (f)) =\{z_1, \ldots, z_d \}$. \ 
\end{proposition}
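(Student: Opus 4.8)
The plan is to exploit the fact that $\tmop{Frob} (f)$ is nothing but the matrix, in the monomial basis $\{1, x, \ldots, x^{d - 1}\}$, of the multiplication-by-$x$ operator $M_x : \mathbb{A} \longrightarrow \mathbb{A}$, $p \mapsto \pi (x \cdot p)$, where $\mathbb{A} =\mathbb{C}[x] / (f)$. Indeed the last column of $\tmop{Frob} (f)$ records $\pi (x^d) = - \sum_{i = 0}^{d - 1} (f_i / f_d) x^i$, which is the reduction of $x \cdot x^{d - 1}$ modulo $f$, while the remaining columns encode $M_x (x^k) = x^{k + 1}$ for $k < d - 1$. Since the spectrum of a linear operator does not depend on the chosen basis, I would compute the eigenvalues of $M_x$ in the Lagrange basis $\{L_1, \ldots, L_d\}$ of Section~\ref{quot}, associated here to the distinct roots $z_1, \ldots, z_d$, where the action of $M_x$ becomes diagonal.

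First I would show that each $L_i$ is an eigenvector of $M_x$ with eigenvalue $z_i$. Using the evaluation representation recalled in Section~\ref{quot}, an element of $\mathbb{A}$ is determined by its values at the roots, so it suffices to compare $x \cdot L_i$ and $z_i L_i$ at each $z_j$. A direct computation gives $(x \cdot L_i) (z_j) = z_j L_i (z_j) = z_j \delta_{ij} = z_i \delta_{ij} = (z_i L_i) (z_j)$, so the two elements agree at all $d$ roots. By the Lagrange interpolation formula they coincide in $\mathbb{A}$, whence $M_x (L_i) = z_i L_i$.

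Having established this, I would conclude quickly: the basis $\{L_1, \ldots, L_d\}$ consists entirely of eigenvectors of $M_x$, so the matrix of $M_x$ in this basis is $\tmop{diag} (z_1, \ldots, z_d)$. This matrix is similar to $\tmop{Frob} (f)$ (via the change-of-basis matrix from the Lagrange to the monomial basis), hence they share the same spectrum, and therefore $\tmop{Spec} (\tmop{Frob} (f)) = \{z_1, \ldots, z_d\}$, as claimed. As a bonus this argument shows that $\tmop{Frob} (f)$ is diagonalizable, which will matter once perturbations are introduced.

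The main obstacle — in fact the only delicate point — is the passage from \emph{agreeing at all roots} to \emph{being equal in $\mathbb{A}$}. This is exactly where the distinctness hypothesis $z_i \neq z_j$ enters: it guarantees that $\dim_{\mathbb{C}} \mathbb{A} = d$ and that the evaluation functionals $\{\tmmathbf{1}_{z_1}, \ldots, \tmmathbf{1}_{z_d}\}$ form a basis of $\mathbb{A}^{\ast}$ dual to $\{L_1, \ldots, L_d\}$, so that an element of $\mathbb{A}$ is uniquely recovered from its $d$ values at the roots. Since this dual-basis statement is precisely the content of Section~\ref{quot}, the whole proof reduces to invoking it. A purely computational alternative would be to expand $\det (x I - \tmop{Frob} (f))$ along its last column and verify by induction on $d$ that it equals $f (x) / f_d$, so that the eigenvalues are the roots of $f$; I would prefer the Lagrange-basis route, however, since it exposes the diagonal structure rather than merely the characteristic polynomial.
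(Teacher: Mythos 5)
Your proof is correct, but it is not the one the paper actually carries out. The paper's proof opens by acknowledging exactly your starting point --- that $\tmop{Frob}(f)$ represents multiplication by $x$ in $\mathbb{C}[x]/(f)$ --- but then deliberately sets it aside and instead proves by induction on $d$ that the characteristic polynomial of $\tmop{Frob}(f)$ equals $f(x)$ up to the factor $-1/f_d$, by expanding the determinant of $\tmop{Frob}(f) - xI$ along its first column. You take the other road: you diagonalize the multiplication operator in the Lagrange basis $\{L_1,\ldots,L_d\}$, using the evaluation representation of Section~\ref{quot} to check $M_x(L_i)=z_iL_i$, and conclude by similarity invariance of the spectrum. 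Both arguments are sound, and you correctly identify the one delicate point in yours (passing from agreement at all roots to equality in $\mathbb{A}$, which is where squarefreeness enters). The trade-off is worth noting: your route buys more --- it exhibits $\tmop{Frob}(f)$ as diagonalizable and ties the proposition to the idempotent structure of $\mathbb{A}$ that the rest of the paper exploits --- whereas the paper's determinant induction buys less per statement but holds without the hypothesis of distinct roots, a point the authors explicitly remark on at the end of their proof. Since the proposition as stated assumes distinct roots, your proof fully establishes it; it simply does not yield the stronger, multiplicity-free version that the paper's computation does.
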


\begin{proof}
  It follows directly from the fact that $\tmop{Frob} (f)$ is the matrix of
  the multiplication by $x$ in \ $\mathbb{C}[x] / (f)$. But here we propose
  to give a direct proof by induction. In fact, we prove by induction that the
  characteristic polynomial of $\tmop{Frob} (f)$ is $f (x)$ itself (up to a
  sign and a scalar factor $1 / f_d$), i.e.:
  \[ \left|\begin{array}{ccccc}
       - x & 1 & 0 & \cdots & - \frac{f_0}{f_d}\\
       0 & - x & 1 & \cdots & - \frac{f_1}{f_d}\\
       \vdots & \vdots & \vdots & \ddots & \vdots\\
       0 & 0 & 0 & \cdots & - x - \frac{f_{d - 1}}{f_d}
     \end{array}\right| = - f (x) . \]
  since we have:
  \[ \left|\begin{array}{ccccc}
       - x & 1 & 0 & \cdots & - \frac{f_0}{f_d}\\
       0 & - x & 1 & \cdots & - \frac{f_1}{f_d}\\
       \vdots & \vdots & \vdots & \ddots & \vdots\\
       0 & 0 & 0 & \cdots & - x - \frac{f_{d - 1}}{f_d}
     \end{array}\right| = - x \ast \left|\begin{array}{ccccc}
       - x & 1 & 0 & \cdots & - \frac{f_1}{f_d}\\
       \vdots & \vdots & \vdots & \ddots & \vdots\\
       0 & 0 & 0 & \cdots & - x - \frac{f_{d - 1}}{f_d}
     \end{array}\right| - \frac{f_0}{f_d} = - (x \ast \tilde{f} (x) +
     \frac{f_0}{f_d}) \]
  and by assumption $\tilde{f} (x) = \frac{f (x) - f_0}{f_d \ast x}$ and
  finally we have the wanted result. Remark that this proof allows to avoid
  the condition that all the roots of $f (x)$ are distinct.
\end{proof}

Then, to compute the roots of $f (x)$ one can compute the eigenvalues of is
Froebenius companion matrix. This is the object of the method proposed
(reintroduced) by Edelman and Murakami {\cite{EM}} and revisited by Fortune
{\cite{F}} and many others trying to use the displacement structure of the
companion matrix. In fact, often, the author realized that the monomial basis
of the quotient algebra is not the most suitable one and proposed to express
the matrix of the same linear application but in other basis. In the case of
the Chebyshev basis this algorithm was already known by Barnett {\cite{B}} and
Cardinal later {\cite{C}}.

In the next section, we will also take advantage of the structure of the
quotient algebra to design an algorithm for gcd computation mainly using
linear algebra (eigenvalues are used in theory and never computed).

\subsection{Structure of quotient and gcd}

Let $f (x) \tmop{and} g (x) \in \mathbb{K}[x]$ such that they are both monic.
As above, we denote $\mathbb{A}=\mathbb{K}[x] / (f)$ and $d = \deg (f)$. We
denote denote $\{\zeta_1, \ldots, \zeta_d \}$ the set of roots of $f (x)$ and
we assume that $f (x)$ is squarefree, i.e. $\zeta_i \neq \zeta_j$ if $i \neq
j$. We define $\mathcal{M}_g : \left\{ \begin{array}{l}
  \mathbb{A} \longrightarrow \mathbb{A}\\
  h \longmapsto \pi (g h)
\end{array} \right.$ where $\pi (p) \in \mathbb{A}$ denote the remainder of
$p (x) \in \mathbb{K}[x]$ by division by $f (x)$. We denote $M_g$ the matrix
of $\mathcal{M}_g$ in the monomial basis $1, x, \ldots, x^{d - 1}$ of
$\mathbb{A}$ but other bases can be used. A matrix representing the map
$\mathcal{M}_g$ is called an extended companion matrix.

\begin{proposition}
  The eigenvalues of $\mathcal{M}_g$ are $\{g (\zeta_1), \ldots, g
  (\zeta_d)\}$.
\end{proposition}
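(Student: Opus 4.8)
The plan is to exploit the idempotent/Lagrange structure of $\mathbb{A}$ developed in Section~\ref{quot} rather than computing a characteristic polynomial directly. The key observation is that in the basis $\{L_1, \ldots, L_d\}$ of idempotents, the multiplication operator $\mathcal{M}_g$ acts diagonally. First I would recall from the excerpt that any $p \in \mathbb{A}$ decomposes as $p = \sum_{i=1}^d p(\zeta_i) L_i$, and that the $L_i$ satisfy $L_i \ast L_j = \delta_{ij} L_i$. Applying $\mathcal{M}_g$ to the basis element $L_i$ then gives
\[
\mathcal{M}_g(L_i) = \pi(g L_i) = \sum_{j=1}^d g(\zeta_j) L_j(\zeta_j) L_j = g(\zeta_i) L_i,
\]
where I used that $L_j(\zeta_j)=1$ and $L_j(\zeta_k)=0$ for $k \neq j$ to collapse the sum. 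Hence each $L_i$ is an eigenvector of $\mathcal{M}_g$ with eigenvalue $g(\zeta_i)$.

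From this the proposition follows immediately: the $d$ polynomials $L_1, \ldots, L_d$ form a basis of the $d$-dimensional space $\mathbb{A}$, so $\mathcal{M}_g$ is diagonalized by this basis and its full spectrum is exactly $\{g(\zeta_1), \ldots, g(\zeta_d)\}$, counted with the appropriate multiplicities. Equivalently, one can phrase this by saying that the matrix of $\mathcal{M}_g$ in the Lagrange basis is the diagonal matrix $\mathrm{diag}(g(\zeta_1), \ldots, g(\zeta_d))$, and since $M_g$ (its matrix in the monomial basis) is similar to this diagonal matrix via the change-of-basis matrix between monomials and Lagrange polynomials, the two share the same eigenvalues.

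The computation above is essentially routine once the idempotent property is in hand, so the real content is simply recognizing that the Lagrange basis simultaneously diagonalizes every multiplication operator. I expect the only point requiring care to be the reliance on squarefreeness of $f$: the clean diagonalization uses that $\zeta_i \neq \zeta_j$ for $i \neq j$, which guarantees the $d$ distinct idempotents $L_i$ exist and span $\mathbb{A}$. This hypothesis is already assumed in the setup of this subsection, so no separate argument is needed. An alternative route would be to observe that $\mathcal{M}_g = g(\mathcal{M}_x)$, i.e. the multiplication-by-$g$ operator is obtained by substituting the companion matrix of $f$ into $g$; since the eigenvalues of $\mathcal{M}_x = \mathrm{Frob}(f)$ are the $\zeta_i$ by the previous proposition, the spectral mapping theorem gives eigenvalues $g(\zeta_i)$. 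I would mention this as a remark but prefer the direct idempotent argument as the main proof, since it is self-contained and makes the eigenvectors explicit.
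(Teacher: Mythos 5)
Your argument is correct and is essentially the paper's own proof: the paper likewise notes that the matrix of $\mathcal{M}_g$ in the Lagrange basis is $\mathrm{diag}(g(\zeta_1), \ldots, g(\zeta_d))$, which is exactly your diagonalization via the idempotents. (One minor typo: in your displayed sum the factor should be $L_i(\zeta_j)$ rather than $L_j(\zeta_j)$, though your stated justification makes the intended computation clear.)
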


\begin{proof}
  It is a direct corollary of the proposition \ref{mainprop} since if we write
  the matrix of this linear map in the Lagrange basis associated to
  $\{\zeta_1, \ldots, \zeta_d \}$ is
  \[ \left(\begin{array}{ccc}
       g (\zeta_1) & \cdots & 0\\
       \vdots & \ddots & \vdots\\
       0 & \cdots & g (\zeta_d)
     \end{array}\right) \]
  and gives the wanted result.
\end{proof}

Trivially, we have:

\begin{corollary}
  We have $\tmop{corank} (\mathcal{M}_g) = \deg (f) - \tmop{rank}
  (\mathcal{M}_g) = \deg (\gcd (f, g))$.
\end{corollary}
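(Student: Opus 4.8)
The plan is to read off the corank of $\mathcal{M}_g$ directly from its diagonalization in the Lagrange basis, and then identify the number of vanishing diagonal entries with $\deg(\gcd(f,g))$.

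First I would invoke the previous proposition: in the Lagrange basis $\{L_1,\ldots,L_d\}$ the matrix of $\mathcal{M}_g$ is the diagonal matrix $\mathrm{diag}(g(\zeta_1),\ldots,g(\zeta_d))$. Since the rank of a diagonal matrix equals the number of nonzero diagonal entries, and a change of basis preserves rank, I get
\[
\tmop{rank}(\mathcal{M}_g) = \#\{\, i \in \{1,\ldots,d\} : g(\zeta_i) \neq 0 \,\},
\]
so that $\tmop{corank}(\mathcal{M}_g) = d - \tmop{rank}(\mathcal{M}_g)$ counts exactly the indices $i$ for which $g(\zeta_i) = 0$.

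The remaining step is to show that this count equals $\deg(\gcd(f,g))$. Here I would use that $f$ is squarefree with distinct roots $\zeta_1,\ldots,\zeta_d$, so that $f(x) = f_d \prod_{i=1}^d (x-\zeta_i)$. A root $\zeta_i$ of $f$ is a common root of $f$ and $g$ precisely when $g(\zeta_i)=0$, and since every $\zeta_i$ is a simple root of $f$, the greatest common divisor is $\gcd(f,g) = c\prod_{i : g(\zeta_i)=0}(x-\zeta_i)$ for some nonzero constant $c$. Its degree is therefore exactly the number of indices $i$ with $g(\zeta_i)=0$, which matches the corank computed above.

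Combining the two counts gives $\tmop{corank}(\mathcal{M}_g) = \deg(\gcd(f,g))$, as claimed. I expect the main (though modest) obstacle to be the second step: one must use the squarefreeness of $f$ to guarantee that each common root contributes multiplicity exactly one to the gcd, so that the degree of the gcd is a plain count of vanishing $g(\zeta_i)$ rather than a sum of multiplicities. Everything else follows immediately from the diagonal form supplied by the preceding proposition, which is why the authors can call the corollary trivial.
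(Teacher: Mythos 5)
Your proposal is correct and follows exactly the route the paper intends: the paper labels the corollary ``trivial'' because it drops out of the preceding proposition's diagonal form $\mathrm{diag}(g(\zeta_1),\ldots,g(\zeta_d))$ in the Lagrange basis, which is precisely the computation you carry out, together with the standing squarefreeness assumption on $f$ to equate the count of vanishing $g(\zeta_i)$ with $\deg(\gcd(f,g))$.
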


The column of index $i$ of $M_g$ is the column vector of the coefficients of
$x^{i - 1} \ast g (x)$.

Let $p_1, \ldots, p_l$ be a basis of $\tmop{Ker} (M_g)$ and let $P_1 (x),
\ldots, P_l (x)$ be the corresponding polynomials. First remark that
$\tmop{Ann}_{\mathbb{A}} (g) =\{P (x) \in \mathbb{A}| P (x) \ast g (x) =
0\}$ is an ideal of $\mathbb{A}$.

\begin{lemma}
  The ideal $\tmop{Ann}_{\mathbb{A}} (g)$ is a principal ideal. 
\end{lemma}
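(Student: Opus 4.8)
The plan is to read off the annihilator from the idempotent description of $\mathbb{A}$ established above. Recall that $\{L_1,\ldots,L_d\}$ is a complete system of orthogonal idempotents, that every $p\in\mathbb{A}$ can be written $p=\sum_{i=1}^{d} p(\zeta_i)\,L_i$, and that both addition and multiplication act coordinatewise on the values $(p(\zeta_1),\ldots,p(\zeta_d))$. Under the resulting identification $\mathbb{A}\cong\mathbb{K}^d$, the element $g$ acts as the diagonal tuple $(g(\zeta_1),\ldots,g(\zeta_d))$, so the annihilation condition becomes a purely coordinatewise one. This is exactly the ``general philosophy'' pointed out after the Lagrange interpolation formula, and it is what makes the proof short.

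First I would characterize $\tmop{Ann}_{\mathbb{A}}(g)$ explicitly. For $P=\sum_{i=1}^{d}P(\zeta_i)\,L_i$ the product rule gives $g\ast P=\sum_{i=1}^{d} g(\zeta_i)\,P(\zeta_i)\,L_i$, so $g\ast P=0$ in $\mathbb{A}$ if and only if $g(\zeta_i)\,P(\zeta_i)=0$ for every $i$, that is, if and only if $P(\zeta_i)=0$ for every index $i$ with $g(\zeta_i)\neq 0$. Writing $T=\{\,i : g(\zeta_i)\neq 0\,\}$, this shows that $\tmop{Ann}_{\mathbb{A}}(g)$ is precisely the $\mathbb{K}$-linear span of $\{\,L_i : i\notin T\,\}$; in particular $\dim_{\mathbb{K}}\tmop{Ann}_{\mathbb{A}}(g)=d-|T|=\deg\gcd(f,g)$, which is consistent with the corank computed in the corollary above.

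Next I would exhibit a single generator, since principality means the ideal is generated by one element. Set $v(x)=f(x)/\gcd(f,g)(x)$. Because $f$ is squarefree, the roots of $v$ are exactly the $\zeta_i$ with $i\in T$, hence $v(\zeta_i)=0$ for $i\in T$ and $v(\zeta_i)\neq 0$ for $i\notin T$, so that $\pi(v)=\sum_{i\notin T} v(\zeta_i)\,L_i$ with every displayed coefficient nonzero. Since the $L_i$ are orthogonal idempotents, multiplying $\pi(v)$ by arbitrary elements of $\mathbb{A}$ scales each coordinate indexed by $i\notin T$ freely and kills the others, so $\pi(v)\ast\mathbb{A}$ is the whole span of $\{\,L_i : i\notin T\,\}$. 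By the characterization above this span is $\tmop{Ann}_{\mathbb{A}}(g)$, so $\tmop{Ann}_{\mathbb{A}}(g)=(\pi(v))$ is principal, generated by the image of $f/\gcd(f,g)$ (equivalently by the idempotent $e=\sum_{i\notin T}L_i$).

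The argument is essentially routine once the coordinatewise picture is in place; the only genuine ingredient is the squarefreeness of $f$, which is what guarantees the clean product-of-fields structure of $\mathbb{A}$ and hence that each ideal is spanned by a subset of the $L_i$. The part worth highlighting, rather than a technical obstacle, is that the natural generator is $f/\gcd(f,g)$: principality is cheap, but pinning down an explicit generator already encodes the gcd one is ultimately after, which foreshadows the role of $\tmop{Ker}(M_g)$ in the sequel. For readers preferring an abstract route, one may instead note that ideals of $\mathbb{A}=\mathbb{K}[x]/(f)$ correspond bijectively to ideals of $\mathbb{K}[x]$ containing $(f)$, all of which are principal because $\mathbb{K}[x]$ is a principal ideal domain; this yields principality immediately but does not display the generator.
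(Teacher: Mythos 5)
Your proof is correct and follows essentially the same route as the paper: both work in the Lagrange/idempotent picture of $\mathbb{A}$ and exhibit the same generator, since your $v=f/\gcd(f,g)$ coincides (up to a scalar) with the paper's $s(x)=\prod_{\zeta\in\mathcal{Z}(f)\setminus(\mathcal{Z}(f)\cap\mathcal{Z}(g))}(x-\zeta)$. The only cosmetic difference is that you identify $\tmop{Ann}_{\mathbb{A}}(g)$ with the span of the idempotents $L_i$ for $i\notin T$ and show $(\pi(v))$ fills that span, whereas the paper argues directly that $s$ divides every annihilating element and lies in the annihilator.
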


\begin{proof}
  Let us define
  \[ s (x) = \underset{\zeta \in \mathcal{Z}(f) \setminus (\mathcal{Z}(f) \cap
     \mathcal{Z}(g))}{\prod} (x - \zeta) . \]
  For all $h \in \tmop{Ann}_{\mathbb{A}} (g)$ it is clear that
  $\mathcal{Z}(h) \supset \mathcal{Z}(f) \backslash (\mathcal{Z}(f) \cap
  \mathcal{Z}(g))$ and then $s$ divide $h$. Furthermore $s \in
  \tmop{Ann}_{\mathbb{A}} (g)$ since in the Lagrange basis
  \[ s (x) \ast g (x) = \underset{i = 1}{\overset{d}{\sum}} s (\zeta_i) \ast
     g (\zeta_i) L_i (x) = 0. \]
  This shows that $\tmop{Ann}_{\mathbb{A}} (g) = (s)$.
\end{proof}

To compute $s (x)$, we built the matrix with columns formed by $p_1, \ldots,
p_l$ and we make a triangulation operating only on the columns. This way we
obtain the polynomial of minimal degree linear combination of $P_1 (x),
\ldots, P_l (x)$ and it is easily seen that this $s (x)$ up to a
multiplicative scalar factor.

\begin{lemma}
  The first column of a column echelon form of the matrix $K_g$ built from a
  basis of $\tmop{Ker} (M_g)$ is the generator of $\tmop{Ann}_{\mathbb{A}}
  (g)$, i.e. it is the vector of the coefficients of $s (x)$ up to a scalar
  multiplication.
\end{lemma}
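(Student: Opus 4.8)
The plan is to identify $\tmop{Ker} (M_g)$ with the annihilator ideal as a space of coefficient vectors, and then to read off the minimal-degree element from the column echelon form. First I would recall that the $i$-th column of $M_g$ is the coefficient vector of $x^{i - 1} g$ reduced modulo $f$, so that for any $h = \sum_j h_j x^j$ of degree $< d$ the product $M_g \cdot (h_0, \ldots, h_{d - 1})^T$ is the coefficient vector of $\pi (h g)$. Consequently a vector $v$ lies in $\tmop{Ker} (M_g)$ if and only if $\pi (h g) = 0$, i.e. $h \in \tmop{Ann}_{\mathbb{A}} (g)$; thus $\tmop{Ker} (M_g)$ is exactly the space of coefficient vectors of the elements of $\tmop{Ann}_{\mathbb{A}} (g)$, and its dimension is $l = \tmop{corank} (M_g) = \deg \gcd (f, g)$ by the corollary above. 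I write $r$ for this common value.

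Next I would invoke the previous lemma, which gives $\tmop{Ann}_{\mathbb{A}} (g) = (s)$ with $s$ monic of degree $d - r$. Since $s, x s, \ldots, x^{r - 1} s$ all have degree $< d$, they are already reduced modulo $f$, and their coefficient vectors have strictly increasing top-nonzero positions $d - r, d - r + 1, \ldots, d - 1$; hence they are linearly independent and, being $r$ in number, form a basis of $\tmop{Ann}_{\mathbb{A}} (g)$ as a $\mathbb{K}$-vector space. Every nonzero element of the ideal is therefore of the form $q s$ with $q \in \mathbb{K}[x]$ and $\deg q < r$, so its degree equals $\deg q + (d - r) \geq d - r$, with equality precisely when $q$ is a nonzero constant. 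This shows that $s$ is, up to a nonzero scalar, the unique element of minimal degree in $\tmop{Ann}_{\mathbb{A}} (g)$.

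Finally I would argue that the column echelon form of $K_g$ extracts this minimal element. Column operations leave the column span unchanged, so the columns of the echelon form still span $\tmop{Ker} (M_g) = \tmop{Ann}_{\mathbb{A}} (g)$. Performing the triangulation so as to eliminate the highest-order coefficients (pivoting on the bottommost nonzero entries) produces columns whose degrees are pairwise distinct; the column of smallest degree is then a minimal-degree element of the span and, by the uniqueness just proved, equals $c \cdot s$ for some $c \neq 0$. Placing it first yields the claim.

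The main obstacle I anticipate is purely a matter of bookkeeping with the echelon convention: one must verify that the ``triangulation operating only on the columns'' is organized to minimize degree, that is, to clear leading coefficients from bottom to top, so that the first column genuinely carries the minimal-degree combination rather than some other staircase entry. Once the orientation of the reduction is fixed, the identification of that column with $s$ up to a scalar is immediate from the principality of $\tmop{Ann}_{\mathbb{A}} (g)$ established above.
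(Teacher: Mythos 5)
Your proof is correct and follows essentially the same route as the paper: the echelon-form columns span $\tmop{Ker}(M_g)=\tmop{Ann}_{\mathbb{A}}(g)$, have pairwise distinct degrees, and the minimal-degree element of the ideal is $s$ up to a scalar, so it must be the first column. You supply details the paper leaves implicit (the identification of the kernel with the annihilator, and the explicit verification via the basis $s, xs,\ldots,x^{r-1}s$ that $s$ is the unique minimal-degree element up to scalar), but the underlying argument is the same.
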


\begin{proof}
  Since the columns of a column echelon form of the matrix $K_g$ are linearly
  independent, they form a basis of $\tmop{Ann}_{\mathbb{A}} (g)$ as
  $\mathbb{K}$-vector space. So $s (x)$ is a linear combination of the
  polynomials associated to those columns. The polynomial associated to the
  column echelon form of $K_g$ have all different degree (because it is an
  echelon form) and so $s (x)$ is a linear combination of those polynomial.
  Because $s (x)$ as the lowest degree possible, it is a scalar multiple of
  the polynomial associated to the first column. 
\end{proof}

\begin{proposition}
  $f (x) \wedge g (x) = \frac{f (x)}{s (x)} .$
\end{proposition}

\begin{proof}
  By construction, we have $s (x) \ast g (x) = 0 \tmop{mod} f (x)$ and so $s
  (x)$ divide $f (x)$. We also have $\gcd ( \frac{f (x)}{s (x)}, g (x)) = \gcd
  (f (x), g (x))$ since the roots of $\frac{f (x)}{s (x)}$ are the root of $f
  (x)$ where $g (x)$ vanishes. Since $\deg ( \frac{f (x)}{s (x)}) = \deg (\gcd
  (f (x), g (x))$ we have the wanted result.
\end{proof}

In all this section, we did not care if the polynomials are known in monomial
or Chebyshev basis for instance. In fact, in order to have an algebraic
algorithm, we only need to be able to perform euclidian division and this is
always the case if the polynomial basis is graduated (as for monomial,
Chebyshev, most of the orthogonal bases).

\section{Bezoutian and Barnett's formula\label{barnett}}

A classical matricial formulation of resultant is given by the B\'ezout
matrix. In this part, we recall the construction of the B\'ezout matrix and a
special factorization of the multiplication matrix expressed in the monomial
basis. This factorization is called Barnett formula (see {\cite{B}}). The
Barnett's formula allows to build the classical extended companion matrix
without using euclidian division and only stable numerical computations.
Furthermore, this factorization reveals that the extended companion matrix has
a special rank structure and we will use this fact later to design a fast
algorithm to compute AGCD.

\begin{definition}
  Let $f$ and $g \in \mathbb{C}[x]$ of degree $m$ and $n$ respectively (with
  $n \geqslant m$), we denote $\Theta_{f, g} (x, y) = \frac{f (x) g (y) - f
  (y) g (x)}{x - y} = \underset{i, j}{\sum} \theta_{i, j} x^i y^j =
  \underset{j = 0}{\overset{m - 1}{\sum}} \kappa_{f, g, j} (x) y^j$. The
  B\'ezout matrix associated with $f$ and $g$ is $B_{f, g} =
  \left(\begin{array}{c}
    \theta_{j, j}
  \end{array}\right)_{i, j \in \{0, \ldots, m - 1\}}$.
\end{definition}

Remark that since $\Theta_{f, g} (x, y) = \Theta_{f, g} (y, x)$ the matrix
$B_{f, g}$ is symmetric. The polynomials $\kappa_{f, g, j} (x)$ are univariate
polynomials of degree at most $m - 1$. One particular case of interest is when
$f = 1$. In this case the B\'ezout matrix has a Hankel structure, i.e.
$\theta_{i, j} = \theta_{i - 1, j + 1}$. In this case we denote $H_{g, i} (x)
= \kappa_{1, g, i} (x)$ for $i \in \{0, \ldots, m - 1\}$ which are called the
Horner polynomials.

\begin{proposition}
  Let $i \in \{0, \ldots, m - 1\}$, the polynomial $H_{g, i} (x) = c_{1, m -
  i} + \cdots + c_{1, m} x^i$ has degree $i$ and since they have different
  degree, they form a basis of $\mathbb{C}[x] / (g)$. Furthermore,
  $\Theta_{1, g} (x, y) = \underset{i = 0}{\overset{m - 1}{\sum}} H_{g, m - i}
  (x) y^i$.
\end{proposition}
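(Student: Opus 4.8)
The plan is to derive everything from a single explicit expansion of the Bézout polynomial $\Theta_{1, g} (x, y)$, from which the shape of the Horner polynomials, their degrees, and the ``furthermore'' identity all fall out at once; the basis claim is then a triangularity argument. Write $g (x) = \sum_{k = 0}^{m} g_k x^k$ with $g_m \neq 0$. First I would expand $\Theta_{1, g} (x, y) = \frac{g (y) - g (x)}{x - y}$ directly, applying the elementary identity $\frac{x^k - y^k}{x - y} = \sum_{j = 0}^{k - 1} x^{k - 1 - j} y^j$ to each monomial. Substituting and collecting the result as a polynomial in $y$ gives $\Theta_{1, g} (x, y) = \sum_j \kappa_{1, g, j} (x) y^j$, where each coefficient $\kappa_{1, g, j} (x)$ is an explicit partial sum of $g_{j + 1}, \ldots, g_m$ against ascending powers of $x$. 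Reading off these partial sums identifies them, up to the index reversal $j \mapsto m - 1 - j$ and an overall sign absorbed into the normalization, with $H_{g, i} (x) = g_{m - i} + g_{m - i + 1} x + \cdots + g_m x^i$. This one computation simultaneously yields the explicit formula for $H_{g, i}$ and the displayed expansion of $\Theta_{1, g}$, since that expansion is nothing but the statement that the $H_{g, \bullet}$ are the $y$-coefficients of $\Theta_{1, g}$.

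From the explicit form the degree claim is immediate: the coefficient of the top power $x^i$ in $H_{g, i}$ is $g_m$, which is nonzero because $\deg (g) = m$, so $\deg (H_{g, i}) = i$ exactly, for each $i \in \{0, \ldots, m - 1\}$.

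For the basis claim I would argue by triangularity. The quotient $\mathbb{C}[x] / (g)$ has dimension $m$ over $\mathbb{C}$, with $\{1, x, \ldots, x^{m - 1}\}$ a basis, by Euclidean division by the degree-$m$ polynomial $g$. Each $H_{g, i}$ has degree $i < m$, so $\pi (H_{g, i}) = H_{g, i}$, and the $m$ polynomials $H_{g, 0}, \ldots, H_{g, m - 1}$ have pairwise distinct degrees $0, 1, \ldots, m - 1$. Hence the matrix expressing them in the monomial basis is triangular with every diagonal entry equal to $g_m \neq 0$, and therefore invertible; a family of polynomials of strictly increasing degree is automatically linearly independent. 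Since there are $m$ of them in an $m$-dimensional space, they form a basis.

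The mathematics here is elementary; the only real difficulty is bookkeeping. In particular the index on $H_{g, i}$ runs opposite to the $y$-power index in $\Theta_{1, g}$ (the reversal $i \leftrightarrow m - 1 - i$), and an overall sign appears according to whether one writes $\frac{g (y) - g (x)}{x - y}$ or $\frac{g (x) - g (y)}{x - y}$. Keeping these conventions consistent between the definition $H_{g, i} = \kappa_{1, g, \bullet}$ and the displayed expansion is the one spot where a slip is easy; once the indexing is pinned down, each of the three assertions is a one-line consequence of the explicit coefficient formula.
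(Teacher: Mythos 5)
Your proof is correct and complete. The paper itself states this proposition \emph{without any proof} (it passes directly to the corollary), so there is nothing to compare against; your direct expansion of $\Theta_{1,g}(x,y)$ term by term via $\frac{x^k-y^k}{x-y}=\sum_{a+b=k-1}x^ay^b$ is exactly the computation the authors leave implicit, and it does deliver all three claims at once: the explicit form of the $y$-coefficients, the degree count (leading coefficient $\pm g_m\neq 0$), and the expansion of $\Theta_{1,g}$. The triangularity argument for the basis claim is standard and correct, since $\dim_{\mathbb{C}}\mathbb{C}[x]/(g)=m$ and you exhibit $m$ polynomials of pairwise distinct degrees $0,\ldots,m-1$. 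You are also right that the only delicate point is bookkeeping, and in fact the statement as printed is itself inconsistent on exactly that point: with the paper's definition $H_{g,i}=\kappa_{1,g,i}$ the coefficient of $y^i$ in $\Theta_{1,g}$ has degree $m-1-i$, so the displayed identity should read $\Theta_{1,g}(x,y)=\sum_{i=0}^{m-1}H_{g,m-1-i}(x)\,y^i$ (the printed $H_{g,m-i}$ is off by one, and $H_{g,m}$ is not even among the indices $0,\ldots,m-1$), and the constants $c_{1,k}$ are never defined --- they are, up to the overall sign coming from $\frac{g(y)-g(x)}{x-y}$, the coefficients $g_k$ of $g$. Your reading ($H_{g,i}=g_{m-i}+g_{m-i+1}x+\cdots+g_mx^i$, with the index reversal $j\mapsto m-1-j$) is the correct repair, and it is worth stating that reversal explicitly as the intended meaning of the proposition rather than only as a convention to be ``kept consistent.''
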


\begin{corollary}
  The matrix $B_{1, g}$ is the basis conversion from the Horner basis $H_0,
  \ldots, M_{m - 1}$ to the monomial basis $1, x, \ldots, x^{n - 1}$ of
  $\mathbb{C}[x] / (g)$.
\end{corollary}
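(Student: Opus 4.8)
The plan is to read the claim off the definition of the B\'ezout matrix together with the expansion established in the preceding proposition, since once the objects are unwound the statement is essentially the definition of a change-of-basis matrix. Recall that, by definition, the entry $\theta_{i,j}$ of $B_{1,g}$ is the coefficient of $x^i y^j$ in the bivariate polynomial $\Theta_{1,g}(x,y)$. The first step is to group $\Theta_{1,g}(x,y)$ by powers of $y$, writing
\[ \Theta_{1,g}(x,y) = \sum_{j} \left( \sum_{i} \theta_{i,j}\, x^i \right) y^j = \sum_{j} \kappa_{1,g,j}(x)\, y^j . \]
Reading off coefficients, the polynomial $\kappa_{1,g,j}(x)$ multiplying $y^j$ has monomial coefficient vector $(\theta_{0,j}, \theta_{1,j}, \ldots)^T$, which is exactly the $j$-th column of $B_{1,g}$.

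Next I would invoke the preceding proposition, which identifies each $\kappa_{1,g,j}$ with a Horner polynomial $H_{g,\,\cdot}$ and which, because these polynomials have pairwise distinct degrees, guarantees that they form a basis of $\mathbb{C}[x]/(g)$. Combining this with the first step, the columns of $B_{1,g}$ are precisely the coordinate vectors, expressed in the monomial basis $1, x, \ldots$, of the elements of the Horner basis.

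To conclude, I would appeal to the standard characterization of a change-of-basis matrix: the matrix whose $j$-th column is the coordinate vector of the $j$-th vector of a basis $\mathcal{B}$, written in a second basis $\mathcal{C}$, is exactly the matrix carrying $\mathcal{B}$-coordinates to $\mathcal{C}$-coordinates. Having shown that the columns of $B_{1,g}$ are the coordinate vectors of the Horner basis in the monomial basis, this identifies $B_{1,g}$ as the conversion from the Horner basis to the monomial basis, as claimed.

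The only delicate point, and the one I regard as the main (if minor) obstacle, is the bookkeeping of the indexing convention. The proposition writes the expansion as $\Theta_{1,g}(x,y) = \sum_i H_{g,m-i}(x)\, y^i$, which reverses the index relative to the plain expansion $\sum_j \kappa_{1,g,j}(x)\, y^j$; one must fix the ordering of the Horner basis $H_0, \ldots$ so that this reversal is absorbed and the $j$-th column of $B_{1,g}$ lines up with the $j$-th declared basis vector. Once the orderings are matched (and the dimension is taken to be $\deg g$, so that the monomial basis is $1, x, \ldots, x^{\deg g - 1}$), the correspondence is exact and no computation beyond comparing coefficients is required.
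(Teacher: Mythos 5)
Your argument is correct and is exactly the unpacking the paper intends: the paper states this corollary with no proof at all, treating it as immediate from the preceding proposition, and your identification of the columns of $B_{1,g}$ with the monomial-basis coordinate vectors of the Horner polynomials is the intended reasoning. Your remark on the index reversal between $\kappa_{1,g,j}$ and $H_{g,m-i}$ is a fair catch of a genuine notational inconsistency in the source, not a gap in your proof.
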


This leads us to the following theorem, known as Barnett formula (see
{\cite{B}}):

\begin{theorem}
  Let $M_f$ be the multiplication matrix associated to $f$ in $\mathbb{C}[x]
  / (g)$ in the monomial basis, we have:
  \[ M_f = B_{f, g} B_{1, g}^{- 1} . \]
\end{theorem}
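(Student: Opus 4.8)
The plan is to diagonalize both Bézout matrices simultaneously by evaluating the generating Bézoutian at the roots of $g$, and then to recognize the resulting conjugate of a diagonal matrix as $M_f$. Write $n = \deg (g)$ and suppose first that $g$ is squarefree with distinct roots $\zeta_1, \ldots, \zeta_n$; then $B_{f, g}$ and $B_{1, g}$ are both $n \times n$ (the indices in the definition ranging over $\{0, \ldots, n - 1\}$). The starting point is the pointwise identity
\[ \Theta_{f, g} (\zeta_k, y) = f (\zeta_k) \, \Theta_{1, g} (\zeta_k, y), \qquad k = 1, \ldots, n, \]
which is immediate from the definition: evaluating the numerator $f (x) g (y) - f (y) g (x)$ at $x = \zeta_k$ annihilates the term $f (y) g (\zeta_k)$ because $g (\zeta_k) = 0$, leaving $f (\zeta_k) g (y) / (\zeta_k - y)$, i.e.\ exactly $f (\zeta_k)$ times $\Theta_{1, g} (\zeta_k, y) = g (y) / (\zeta_k - y)$.

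Next I would translate this into matrix form. Setting $v_k = (1, \zeta_k, \ldots, \zeta_k^{n - 1})^{\top}$ and reading off the coefficient of $y^j$ on each side, the $j$-th column of $B_{f, g}$ holds the monomial coefficients of $\kappa_{f, g, j} (x)$, so the identity becomes $v_k^{\top} B_{f, g} = f (\zeta_k) \, v_k^{\top} B_{1, g}$. Stacking the $v_k^{\top}$ as the rows of the Vandermonde matrix $V$ and writing $D_f = \tmop{diag} (f (\zeta_1), \ldots, f (\zeta_n))$, these $n$ relations collapse to the single identity $V B_{f, g} = D_f V B_{1, g}$. Since the $\zeta_k$ are distinct, $V$ is invertible, whence $B_{f, g} = (V^{- 1} D_f V) \, B_{1, g}$.

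The final step is to identify $V^{- 1} D_f V$ with $M_f$, using exactly the eigenstructure already exploited for the companion matrix. The matrix $V$ sends the monomial coordinates of a class $p \in \mathbb{C}[x] / (g)$ to its vector of values $(p (\zeta_k))_k$; since $(f p) (\zeta_k) = f (\zeta_k) p (\zeta_k)$ and evaluation at the roots of $g$ factors through reduction modulo $g$, we get $V M_f = D_f V$, that is $M_f = V^{- 1} D_f V$. Combining with the previous display gives $B_{f, g} = M_f B_{1, g}$, and because $B_{1, g}$ is invertible (it is the Horner-to-monomial basis change of the preceding corollary) we conclude $M_f = B_{f, g} B_{1, g}^{- 1}$.

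The main obstacle is removing the squarefreeness hypothesis, since the diagonalization breaks down when $g$ has repeated roots. I would handle this by a density argument: the entries of $B_{f, g}$, of $B_{1, g}^{- 1}$ (note $\det B_{1, g} = \pm 1$, as $B_{1, g}$ is a unitriangular basis change, so its inverse is polynomial in the coefficients of $g$), and of $M_f$ (reduction modulo the monic $g$) are all polynomial in the coefficients of $f$ and $g$. Hence the identity $M_f B_{1, g} = B_{f, g}$, valid on the Zariski-dense set where $g$ has distinct roots, extends to every $g$. A reader wanting a uniform argument could instead verify directly that $B_{f, g}$ maps Horner coordinates to the monomial coordinates of $f \cdot (\, \cdot \,) \bmod g$, the corollary's statement about $B_{1, g}$ being the case $f = 1$; the root-based computation above is precisely what makes this interpretation transparent.
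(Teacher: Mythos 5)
Your proof is correct, but it takes a genuinely different route from the paper's. The paper argues entirely algebraically: from the splitting $\Theta_{f,g}(x,y) = f(x)\frac{g(y)-g(x)}{x-y} + g(x)\frac{f(x)-f(y)}{x-y}$ it reads off the congruence $\Theta_{f,g}(x,y) \equiv f(x)\,\Theta_{1,g}(x,y) \pmod{g(x)}$, so that comparing coefficients of $y^j$ shows each column of $B_{f,g}$ is the monomial coefficient vector of $f(x)H_{g,j}(x) \bmod g(x)$; in other words $B_{f,g}$ is the matrix of multiplication by $f$ written from the Horner basis to the monomial basis, and $M_f = B_{f,g}B_{1,g}^{-1}$ follows by composing with the basis change $B_{1,g}^{-1}$. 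This is exactly the ``uniform argument'' you sketch in your last sentence, and it needs no hypothesis on the roots of $g$. Your main argument instead specializes the same congruence at the roots $\zeta_k$ of $g$, packages it as $VB_{f,g} = D_f V B_{1,g}$ with a Vandermonde matrix, and identifies $V^{-1}D_fV$ with $M_f$ via the eigenstructure of the multiplication operator; this is very much in the spirit of the Lagrange-basis philosophy of Section \ref{quot}, and it makes the simultaneous diagonalization of the two B\'ezoutians explicit, but it costs you the squarefreeness hypothesis, which you then correctly remove by Zariski density (all entries being polynomial in the coefficients once $g$ is monic). Both arguments are sound; the paper's is shorter and uniform, yours is more transparent about \emph{why} the formula holds. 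Two cosmetic points: your claim $\det B_{1,g} = \pm 1$ uses monicity of $g$ (in general it is $\pm g_n^{\,n}$, still nonzero), and you have silently fixed the paper's indexing so that both B\'ezout matrices are $n\times n$ with $n = \deg g$, which is indeed the reading under which the statement makes sense.
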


\begin{proof}
  We have $\Theta_{f, g} (x, y) = f (x) \frac{g (y) - g (x)}{x - y} + g (x)
  \frac{f (x) - f (y)}{x - y}$ and so $f (x) \frac{g (x) - g (y)}{x - y}
  \equiv \Theta_{f, g} (x, y)$ in $\mathbb{C}[x, y] / (g (x))$. So, for each
  $i \in \{0, \ldots, m - 1\}$, we have $\Theta_{f, g, i} (x) \equiv f (x)
  \Theta_{1, g, i} (x)$. This last equality means that $B_{f, g}$ is the
  matrix of the multiplication by $f (x)$ in $\mathbb{C}[x] / (g)$. The
  result follows directly from this fact.
\end{proof}

The Barnett's formula reveals the rank structure of the multiplication matrix.
Furthermore, this formula is already known if we choose Chebyshev basis
instead of monomial basis to express the polynomials and the matrices have
exactly the same nature. \

\section{Structured matrices and asymptotically fast algorithms}

In this section, we briefly recall some basics on displacement structured
matrices and related algorithms.

\subsection{Displacement structure}

Given an integer $n$ and a complex number $\vartheta$ with $| \vartheta | =
1$, define the circulant matrix
\[ Z_{n^{}}^{\vartheta} = \left(\begin{array}{ccccc}
     0 &  &  &  & \vartheta\\
     1 & 0 &  &  & \\
     & 1 & \ddots &  & \\
     &  & \ddots & \ddots & \\
     &  &  & 1 & 0
   \end{array}\right) \in \mathbb{C}^{n \times n} . \]
Next, define the {\tmem{Toeplitz-like displacement operator}} as the linear
operator
\begin{eqnarray*}
  \nabla_T : \mathbb{C}^{m \times n} \longrightarrow \mathbb{C}^{m \times n}
  &  & \\
  &  & \\
  \nabla_T (A) = Z_m^1 A - A Z_n^{\vartheta} . &  & 
\end{eqnarray*}
A matrix $A \in \mathbb{C}^{m \times n}$ is said to be {\tmem{Toeplitz-like}}
if $\nabla_T (A)$ is a small rank matrix (where ``small'' means small with
respect to the matrix size). The number $\alpha = \tmop{rank} (\nabla (A))$ is
called the {\tmem{displacement rank}} of $A$. If $A$ is Toeplitz-like, then
there exist (non-unique) {\tmem{displacement generators}} $G \in
\mathbb{C}^{m \times \alpha}$ and $H \in \mathbb{C}^{\alpha \times n}$ such
that
\[ \nabla (A) = G H. \]
Toeplitz matrices and their inverses are examples of Toeplitz-like matrices.
Another useful example is the multiplication matrix $M_f$, which has
Toeplitz-like displacement rank equal to $2$, regardless of its size.

A similar definition holds for {\tmem{Cauchy-like}} structure; here the
relevant displacement operator is
\begin{eqnarray*}
  \nabla_C : \mathbb{C}^{m \times n} \longrightarrow \mathbb{C}^{m \times n}
  &  & \\
  \nabla_C (A) =^{} D_1 A - A^{} D_2, &  & 
\end{eqnarray*}
where $D_1$ and $D_2$ are diagonal matrices of appropriate size with disjoint
spectra. See {\cite{KS}} for a detailed description of displacement structure.

\subsection{Fast solution of displacement structured linear systems}

Gaussian elimination with partial pivoting (GEPP) is a well-known and reliable
algorithm that computes the solution of a linear system. Its arithmetic
complexity for an $n \times n$ matrix is asymptotically $\mathcal{O}(n^3)$.
But if the system matrix exhibits displacement structure, it is possible to
apply a variant of GEPP with complexity $\mathcal{O}(n^2)$. The main idea
consists in operating on displacement generators rather than on the whole
matrix; see {\cite{GKO}} for details.

Strictly speaking, the GKO algorithm performs GEPP (or, equivalently,
computes the PLU factorization) for Cauchy-like matrices. However, several
authors have pointed out (see {\cite{GKO}}, {\cite{Hei}}, {\cite{Pan}}) that
Toeplitz-like matrices can be stably and cheaply transformed into Cauchy-like
matrices; the same is true for displacement generators.

Consider, for instance, the case $\vartheta = 1$ and let $A$ be an $n \times
n$ Toeplitz-like matrix with generators $G$ and $H$. Denote by $D_0$ the
matrix $\tmop{diag} (1, e^{\pi i / n}, \ldots, e^{(n - 1) \pi i / n})$ and let
$F$ be the Fourier matrix of size $n \times n$. Then the matrix $F A D_0^{- 1}
F^H $is Cauchy-like, of the same displacement rank as $A$, with respect to the
displacement operator defined by $D_1 = \tmop{diag} (1, e^{2 \pi i / n},
\ldots, e^{2 \pi i (n - 1) / n})$ and $D_2 = \tmop{diag} (e^{\pi i / n}, e^{3
\pi i / n}, \ldots, e^{(2 n - 1) \pi i / n})$. Its Cauchy-like generators can
be computed as $\hat{G} = F G$ and $\widehat{H^{}}^H = F D_0 H^H$.

Generalization to the case of $m \times n$ rectangular matrices is possible.
In this case, the parameter $\vartheta$ should be chosen so that the spectra
of $D_1$ and $D_2$ are well separated (see {\cite{AR}} and {\cite{BB}}).

We also point out that the GKO algorithm can be adapted to pivoting techniques
other than partial pivoting ({\cite{Gu}}, {\cite{Ste}}). This is especially
useful in case of instability due to internal growth of generator entries. A
Matlab implementation of the GKO algorithm that takes into account several
pivoting strategies is found in the package DRSolve described in {\cite{AR}}.
In our implementation, we use the pivoting strategy proposed in {\cite{Gu}}.

\section{A structured approach to AGCD computation}

We propose here an algorithm that exploits the algebraic and displacement
structure of the multiplication matrix to compute the AGCD of two given
polynomials with real coefficients (as defined in section \ref{intro}).

\subsection{Rank estimation}

It has been pointed out in Section \ref{exact} that the rank deficiency of the
multiplication matrix equals the AGCD degree. Here we use the structured
pivoted LU decomposition to estimate the approximate rank of the
multiplication matrix. Recall that $M_g$ has a Toeplitz-like structure with
displacement rank 2; it can then be transformed into a Cauchy-like matrix
$\hat{M}_g$ as described in Section 5.2. Fast pivoted Gauss elimination yields
a factorization $\hat{M}_g = P_1 L U P_2$, where $L$ is a square, nonsingular,
lower triangular matrix with diagonal entries equal to 1, $U$ is upper
triangular and $P_{1, 2}$ are permutation matrices. Inspection of the diagonal
entries (or of the row norms) of $U$ allows to estimate the approximate rank
of $\hat{M}_g$ and, therefore, of $M_g$.

\subsection{Minimization of a quadratic functional}

Let us suppose that:
\begin{itemizedot}
  \item the polynomial $f (x) = \sum_{j = 0}^n f_j x^j$ is exactly known,
  
  \item the polynomial $g (x) = \sum_{j = 0}^m g_j x^j$ is approximately known
  and may be perturbed, so that we consider its coefficients as variables,
  
  \item the AGCD degree is known. 
\end{itemizedot}
Then we can reformulate the problem of AGCD computation as the minimization
of a quadratic functional. Indeed, recall that the cofactor $v (x)$ with
respect to $f (x)$ is defined by the ``shortest'' vector (i.e., the vector
with the maximum number of trailing zeros) that belongs to the null space of
$M_g .$ We assume $v (x)$ to be monic; we denote its degree as $k$ and we have
\[ M_g v = M_g \cdot \left(\begin{array}{c}
     v_0\\
     \vdots\\
     v_{k - 1}\\
     1\\
     0\\
     \vdots\\
     0
   \end{array}\right) = \left(\begin{array}{c}
     0\\
     \vdots\\
     \vdots\\
     \vdots\\
     \vdots\\
     \vdots\\
     0
   \end{array}\right) . \]
Also observe that the entries of $M_g$ are linear functions of the
coefficients of $g (x)$. Then the equation $M_g v = 0$ can be rewritten as
$\mathcal{F}(g, v)$=0, where the functional $\mathcal{F}$ is defined as
\begin{eqnarray*}
  \mathcal{F}: \mathbb{C}^{m + 1} \times \mathbb{C}^k \longrightarrow
  \mathbb{R}_+ &  & \\
  \mathcal{F}(g, v) = \| M_g v \|_2^2 . &  & 
\end{eqnarray*}
For a preliminary study of the problem, we have chosen to solve the equation
$\mathcal{F}(g, v)$=0 by means of Newton's method, applied so as to exploit
structure. Denote by $z = [g_0, \ldots, g_m, v_0, \ldots, v_{k - 1}]^T$ the
vector of unknowns; then each Newton step has the form
\[ z_{}^{(j + 1)} = z^{(j)} - J (g^{(j)}, v^{(j)})^{\dagger} M_{g^{(j)}}
   v^{(j)} . \]
In particular, notice that the Jacobian matrix associated with $\mathcal{F}$
is an $n \times (m + k + 1)$ Toeplitz-like matrix of displacement rank $3$.
This property allows to compute a solution of the linear system $J (g^{(j)},
v^{(j)}) y = M_{g^{(j)}} v^{(j)}$ in a fast way; therefore, the arithmetic
complexity of each iteration is quadratic w.r.t. the degree of the input
polynomials.

We propose in the future to take into consideration other optimization methods
in the quasi-Newton family, such as BFGS.

\subsection{Computation of displacement generators}

In order to perform fast factorization of the multiplication matrix $M_g$, we
need to compute Toeplitz-like displacement generators. It turns out that the
range of $\nabla (M_g)$ is spanned by the first and last column of the
displaced matrix, and the columns of indices from $2$ to $n - 1$ are multiples
of the first one. Therefore, it suffices to compute a few rows and columns of
$M_g$ in order to obtain displacement generators. This can be done in a fast
and stable way by using Barnett's formula. \ If we denote as $e_j$ the $j$th
vector of the canonical basis of $\mathbb{C}^{^n}$, then the computation of
the $j$-th column of $M_g $can be seen as

$M_g (:, j) = B (f, g) \cdot \left( B (1, f)^{- 1} e_j \right),$

{\noindent}that is, it consists in solving a triangular Hankel linear system
and computing a matrix-vector product. For row computation, recall that the
Bezoutian is a symmetric matrix; we have analogously:

$M_g (j, :) = e_{j^{}}^T \cdot B (f, g) \cdot B (1, f)^{- 1} = \left( B (1,
f)^{- 1} B (f, g) e_j^T \right)^T$,

{\noindent}so that the computation of a row of $M_g$ amounts to performing a
matrix-vector product and solving a Hankel triangular system.

A similar approach holds for computation of displacement generators of the
Jacobian matrix $J (g, v)$ associated with the functional $\mathcal{F}(g, v)$.

\subsection{Description of the algorithm} \

Input: coefficients of polynomials $f (x)$ and $g (x)$.

{\noindent}Output: a perturbed polynomial $\tilde{g} (x)$ such that $f$ and
$\tilde{g}$ have a nontrivial common factor.
\begin{enumeratenumeric}
  \item Estimate the approximate rank $k$ of $M_g$ by computing a fast pivoted
  LU decomposition of the associated Cauchy-like matrix.
  
  \item Again by using fast LU, compute a vector $v = \left[ v_0, v_1, \ldots
  ., v_{k - 1}, 1, 0, \ldots ., 0 \right]^T$ in the approximate null space of
  $M_g$.
  
  \item Apply structured Newton with initial guess $(g, v)$ and compute
  polynomials $\tilde{g}$ and $\tilde{v}$ such that $f$ and $\tilde{g}$ have a
  common factor of degree $\deg f - k$ and $\tilde{v}$ is the monic cofactor
  for $f$.
\end{enumeratenumeric}

\subsection{Numerical experiments and computational issues}

We have written a preliminary implementation of the proposed method in Matlab
(available at the URL
\tmtexttt{http://www.unilim.fr/pages\_perso/paola.boito/MMgcd.m}).

The results of a few numerical experiments are shown below. The polynomials
$f$ and $g$ are monic and have random coefficients uniformly distributed over
$\left[ - 1, 1] \right.$. They have an exact GCD of prescribed degree. A
perturbation is then added to $g$. The perturbation vector has random entries
uniformly distributed over $[ - \eta, \eta]$ and its norm is of the order of
magnitude of $\eta$. We show:
\begin{itemizedot}
  \item the residual $\mathcal{F}( \tilde{g}, \tilde{v}),$
  
  \item the 2-norm distance between the exact and the computed cofactor $v$,
  
  \item the 2-norm distance between the exact and the computed perturbed
  polynomial $g$ (which is expected to be roughly of the same order of
  magnitude as $\eta$).
\end{itemizedot}
In the following table we have taken $\eta =$1e-5.

\begin{tabular}{|l|l|l|l|}
  \hline
  $n, m, \deg f$ & $\mathcal{F}( \tilde{g}, \tilde{v})$ & $\| v - \tilde{v}
  \|_2$ & $\left. \| g - \tilde{g} \right\|_2$\\
  \hline
  8, 7, 3 & 1.02e-15 & 1.19e-15 & 1.40e-5\\
  \hline
  15, 14, 5 & 1.51e-15 & 2.26e-15 & 1.35e-4\\
  \hline
  22, 22, 7 & 2.07e-13 & 1.40e-13 & 2.20e-4\\
  \hline
  36, 36, 11 & 1.19e-12 & 5.07e-14 & 0.0012\\
  \hline
\end{tabular}

Here are results for $\eta =$1e-8:

\begin{tabular}{|c|c|c|c|}
  \hline
  $n, m, \deg f$ & $\mathcal{F}( \tilde{g}, \tilde{v})$ & $\| v - \tilde{v}
  \|_2$ & $\| g - \tilde{g} \|_2$\\
  \hline
  8, 7, 3 & 5.49e-15 & 1.63e-15 & 5.85e-8\\
  \hline
  28, 27, 13 & 7.90e-14 & 8.98e-14 & 6.50e-7\\
  \hline
  38, 37, 13 & 4.88e-12 & 4.26e-12 & 2.30e-5\\
  \hline
  58, 57, 23 & 2.03e-12 & 4.40e-12 & 2.54e-4\\
  \hline
\end{tabular}

There are several issues in our approach that deserve further investigation.
Let us mention in particular:
\begin{itemizedot}
  \item The choice of a threshold (or a more refined technique) for estimating
  approximate rank.
  
  \item Normalization of polynomials: here we mostly work with monic
  polynomials, but other normalizations may be considered.
  
  \item The structured implementation of the optimization step (minimizing
  $\mathcal{F}(g, v)$). We have used for now a heuristic structured version of
  the Gauss-Newton algorithm. Observe that each step of classical Gauss-Newton
  applied to our problem has the form $z^{(j + 1)}_{^{}} = z_{}^{(j)} -
  y^{(j)}_{}$, where $z^{(j)}$ is the vector containing the coefficients of
  the $j$-th iterate polynomials $g^{(j)} \tmop{and} v^{(j)}, \tmop{and}
  y_{}^{(j)}$ is the least-norm solution to the underdetermined system $J
  (g^{(j)}, v^{(j)}) y^{(j)} = M_{g^{(j)}} v^{(j)}$. Computing this least-norm
  solution in a structured and fast way is a difficult point that will require
  more work. Our implementation gives a solution which is not, in general, the
  least-norm one, even though it is typically quite close. Further work will
  also include a study of other possible optimization methods that lend
  themselves well to a structured approach.

\end{itemizedot}

\section{Conclusions}

We have proposed and implemented a fast structured matrix-based approach to a
variant of the AGCD problem, namely, the problem of computing an approximate
greatest common divisor of two univariate polynomials, one of which is known
to be exact. To our knowledge, this variant has been so far neglected in the
existing literature. It may be also interesting when one polynomial is known
with high accuracy and the other is not.

Our approach is based on the structure of the multiplication matrix and on
the subsequent reformulation of the problem as the minimization of a suitably
defined functional. Our choice of the multiplication matrix $M_g$ over other
resultant matrices (e.g., Sylvester, B\'ezout...) is motivated by
\begin{itemizedot}
  \item the smaller size of $M_g$, with respect e.g. to the Sylvester matrix,
  
  \item the strong link between the null space of $M_g$ and the gcd, and in
  particular the fact that the null space of $M_g$ immediately yields a gcd
  cofactor,
  
  \item the displacement structure of $M_g$,
  
  \item the possibility of computing selected rows and columns of $M_g$ in a
  stable and cheap way, thanks to Barnett's formula.
\end{itemizedot}
This is, however, a preliminary study. Further work will include
generalizations of the proposed problem and a more thorough analysis of the
optimization part of the algorithm. Furthermore, this approach can be
generalized in several intersting way:
\begin{itemizedot}
  \item using better bases then the monomial one,
  
  \item it can be extended to some multivariate setting to compute the
  co-factor of a polynomial $g$ in $\mathbb{C}[x_1, \ldots, x_n] / (f_1,
  \ldots, f_n)$ when $f_1, \ldots, f_n$ define a complete intersection since
  Barnett formula still holds,
  
  \item to compute the AGCD of $f$ with $g_1, \ldots, g_k$ where $f$ is known
  with accuracy but $g_1, \ldots, g_k$ are inaccurate, one can take $g$ as a
  linear combination of $g_1, \ldots, g_k$ with our method and succeed with a
  high probability.
\end{itemizedot}

\end{document}